\newcommand{\isdef}{\stackrel{\textrm{def}}{=}}
\DeclareMathOperator*{\esssup}{ess\,sup}
\theoremstyle{plain} 
\theoremstyle{remark} \newtheorem{remark}{\textbf{Remark}}
\theoremstyle{plain} \newtheorem{theorem}{\textbf{Theorem}}
\theoremstyle{plain} 
\theoremstyle{plain} 
\theoremstyle{definition} 
\theoremstyle{plain}
\newcommand{\pushright}[1]{\ifmeasuring@#1\else\omit\hfill$\displaystyle#1$\fi\ignorespaces}
\newcommand{\pushleft}[1]{\ifmeasuring@#1\else\omit$\displaystyle#1$\hfill\fi\ignorespaces}
\newcommand*\rfrac[2]{{}^{#1}\!/_{#2}}
\newcommand{\squash}[1]{\raisebox{0.04ex}[0pt][0pt]{\small$\textstyle #1$}}
\begin{document}

\title{The expected bit complexity of the von Neumann rejection algorithm}
\author{Luc Devroye\footnote{School of Computer Science, McGill University, Canada}\,~and~Claude Gravel\footnote{Department of Computer Science and Operations Research, Universit\'e de Montr\'eal, Canada}}
\date{\today}
\maketitle

\begin{abstract}
In 1952, von Neumann introduced the rejection method
for random variate generation. We revisit this algorithm
when we have a source of perfect bits at our disposal.
In this random bit model, there are universal lower bounds for
generating a random variate with a given density to
within an accuracy $\epsilon$ derived by Knuth and Yao, and
refined by the authors. In general, von Neumann's method
fails in this model. We propose a modification that insures
proper behavior for all Riemann-integrable densities on
compact sets, and show that the expected number of
random bits needed behaves optimally with respect to universal lower bounds. In particular, we introduce the notion of an oracle that evaluates the supremum and infimum of a function on any rectangle of $\mathbb{R}^{d}$, and develop a quadtree-style extension of the classical rejection method.

\textbf{Keywords: }random number generation, random bit model, von Neumann sampling algorithm, tree-based algorithms, random sampling, entropy

\textbf{AMS subject classifications: }65C10 Random number generation, 68Q25 Analysis of algorithms and problem complexity, 68Q30 Algorithmic information theory, 68Q87 Probability in computer science (algorithm analysis, random structures, phase transitions, etc.), 68W20 Randomized algorithms, 68W40 Analysis of algorithms

\end{abstract}

\section{Introduction}

\textbf{\Large{T}}\normalsize{}he purpose of this paper is to discuss von Neumann's \cite{vN51} rejection method to generate a random variable $X$ under the random bit model. In this model, we have access to an infinite sequence of independent random Bernoulli$(\rfrac{1}{2})$ bits, and are interested in the complexity as measured by the number of bits used until halting. For integer-value random variables $X$, the entire story is known. Knuth and Yao \cite{KnuthYao1976} obtained lower bounds on the expected number of bits for the exact simulation of $X$, and exhibited optimal algorithms. On the other hand, for random variables $X$ with a density on $\mathbb{R}^{d}$, no exact algorithm exists, since any output of an algorithm delivers a function of a (possibly random) number of random bits. Thus, it is necessary to introduce the notion of an $\epsilon$-approximation (see section \ref{sect:bisection}) using Wasserstein $L_{\infty}$-distance (Devroye and Gravel \cite{DevGra2015paper1} and Rachev and R\"{u}schendorf \cite{RatRus1998}).

A naive application of the rejection method---one of the most often used methods in simulation---leads to errors and inconsistencies. To deal with this, we introduce the notion of an oracle that computes the supremum and infimum of a function over any rectangles of $\mathbb{R}^{d}$ (see section \ref{sect:compactcase}). The oracle can be used in conjunction with a quadtree partition of the space to design a rejection algorithm that is guaranteed to deliver an $\epsilon$-approximation (sections \ref{sect:compactcase} and \ref{sect:noncompacta1}). We show that is valid whenever $f$ is Riemann-integrable and derive expected complexity bounds (in terms of the number of random bits consumed) that are close to the universal lower bounds of \cite{DevGra2015paper1}.

We believe that random number generation libraries should offer the possibility of specifying any $\epsilon$---no matter how small---as an input. Current practice entirely disregards the effects of approximations. In this paper and a companion paper \cite{DevGra2015paper1}, we take a first small step towards this goal. Applications in physics (see \cite{Karney_2016_normal_sampling} who requires and develops $\epsilon$-accurate normal generators), quantum computing (see Brassard, Devroye, and Gravel \cite{ghz_ieee_BrDeGr_2016}), and other areas of science demand very precise computations. Much more is needed of course, especially when many random variables are combined in scientific computation---how does an $\epsilon$-approximation propagate through a system, for example?

\section{The discrete case}\label{sect:discrete}

Consider two probability vectors $(p_{i})_{i\in\mathbb{Z}}$, and $(q_{i})_{i\in\mathbb{Z}}$, where
\begin{displaymath}
\sup_{i\in\mathbb{Z}}\frac{p_{i}}{q_{i}}\leq C
\end{displaymath}
for a finite constant $C$. Then von Neumann's rejection method can be reformulated as follows, if $C$ is explicitly known:
\begin{flushright}
\color{gray}{\textbf{The algorithm is at the beginning of the next page.}}\color{black}{}
\end{flushright}
\vfill
\clearpage

\begin{algorithm}[h]
\caption{Von Neumann's method for discrete distributions in the random bit model}
\begin{algorithmic}[1]\label{algo:orinatumsupolov}
\LOOP
\STATE Generate $X$ according to $(q_1,q_2,\ldots)$. \COMMENT{In the random bit model, the Knuth-Yao or Han-Hoshi algorithm can be used here to generate $X$.}
\STATE Generate $U$ uniformly on $[0,1]$.
\IF{ $UCq_{X}\leq p_{X}$}
\RETURN $X$ \COMMENT{Exit}
\ENDIF
\ENDLOOP
\end{algorithmic}
\end{algorithm}

The expected number of iterations in this algorithm is $C$. The returned random variable, $X$, has distribution $(p_{i})_{i\in\mathbb{Z}}$. Knuth and Yao \cite{KnuthYao1976} showed that to generate $X$, the expected number of random bits required by any algorithm is at least the entropy of $X$,
\begin{displaymath}
\mathcal{E}(X)\isdef\sum_{i\in\mathbb{Z}}{q_i\log_{2}\bigg(\frac{1}{q_i}\bigg)}.
\end{displaymath}
They also exhibited an algorithm that requires an expected number of bits not exceeding $\mathcal{E}(X)+2$. Note that given $X$, the decision
\begin{displaymath}
UCq_{X}\leq p_{X}
\end{displaymath}
can be made using $2$ expected random bits because to decide $U\leq p_{X}\slash Cq_{X}$ given $X$ follows a geometric law with parameter $\rfrac{1}{2}$; we compare the $i^{\text{th}}$ bit of $U$ with the $i^{\text{th}}$ bit of $p_{X}\slash Cq_{X}$ until both don't agree for integers $i>0$ (see, e.g., Devroye and Gravel \cite{DevGra2015paper1}). The expected number of random bits needed by this implementation is not more than
\begin{displaymath}
C\big(\mathcal{E}(X)+2\big).
\end{displaymath}
This is usually quite far from the lower bound of Knuth and Yao, $\sum_{i\in\mathbb{Z}}{p_{i}\log_{2}\frac{1}{p_i}}$. It is worth to mention an application of the rejection method in the bit model to the simulation of physical phenomena and to communication complexity in Brassard, Devroye, and Gravel \cite{ghz_ieee_BrDeGr_2016}. The remainder of the paper is concerned with the case in which $X$ has a density $f$. Since such $X$ cannot be generated exactly by any algorithm, it has to be approximated in some manner by a discrete random variable, and thus we require an appropriate---and as it turns out, nontrivial---generalization of Algorithm \ref{algo:orinatumsupolov}.

\section{Bisection algorithm}\label{sect:bisection}

The building block for continuous random generation is the bisection algorithm, which is mathematically equivalent to an algorithm given in Devroye and Gravel \cite{DevGra2015paper1}. We develop a version here that is convenient for later use. The analysis of Theorem \ref{thm_bisect} below is new. Consider an algorithm for generating a random variable $X$ with density $f$ on $\mathbb{R}^{d}$ that has the following property: for a given $\epsilon>0$, it outputs a random variable $X_{\epsilon}\in\mathbb{R}^{d}$---an $\epsilon$-approximation of $X$---such that there exists a coupling of $(X,X_{\epsilon})$ with
\begin{displaymath}
\esssup\|X-X_{\epsilon}\|\leq\epsilon,
\end{displaymath}
where $\|\cdot\|$ is the $L_{\infty}$-distance in $\mathbb{R}^{d}$. It is understood that necessarily, $X_{\epsilon}$ is discret since it is a function of a (random) finite number of random bits. We call $X_{\epsilon}$ an $\epsilon$-approximation of $X$.

Devroye and Gravel \cite{DevGra2015paper1} showed that if $T_{\epsilon}$ is the random number of bits needed by any algorithm for generating such an approximation $X_{\epsilon}$, then
\begin{align}
\mathbf{E}(T_{\epsilon})\geq \mathcal{E}(f)+d\log_{2}\bigg(\frac{1}{\epsilon}\bigg)-d,\label{lower_bnd}
\end{align}
where $\mathcal{E}(f)$ is the differential entropy of $f$,
\begin{displaymath}
\mathcal{E}(f)=\int{f\log_{2}\bigg(\frac{1}{f}\bigg)}.
\end{displaymath}
The lower bound is valid under a technical condition known as R\'{e}nyi's condition (see R\'{e}nyi \cite{Renyi1959} and Csisz\`{a}r \cite{Csiszar1961}), namely that the entropy of the discrete random variable $\lfloor X\rfloor$, which takes values on the grid of all integer-valued vectors of $\mathbb{R}^{d}$, be finite. The lower bound (\ref{lower_bnd}) serves as a guide to calibrate and compare the rejection algorithms presented in this paper. It is especially crucial to match its main term, \mbox{$d\log_{2}\big(\squash{\frac{1}{\epsilon}}\big)$}, without an extra multiplicative constant.

We require an auxiliary set of results on bisection algorithms for generating a random variate that is an $\epsilon$-approximation of a random variable $X$ with a continuous (not necessarily absolutely continuous) distribution function $G$ on a compact interval $[a,b]$ of length $L\isdef|b-a|$. The bisection Algorithm \ref{bisecto_algo} assumes that we have access to both $G$ and $G^{-1}$.

\begin{flushright}
\color{gray}{\textbf{The algorithm is at the beginning of the next page.}}\color{black}{}
\end{flushright}
\vfill
\clearpage

\begin{algorithm}
\caption{The bisection algorithm in the random bit model (Devroye and Gravel \cite{DevGra2015paper1})}\label{bisecto_algo}
\begin{algorithmic}[1]
\STATE $J\leftarrow [G(a),G(b)]=[0,1]$
\LOOP
\IF {$|I|\isdef b-a\leq 2\epsilon$}
\STATE $X_{\epsilon}\leftarrow \frac{a+b}{2}$
\RETURN $X_{\epsilon}$ \COMMENT{Exit}
\ELSE
\STATE $B\leftarrow \texttt{random unbiased bit}$.
\STATE $z\leftarrow G^{-1}\Big(\frac{G(a)+G(b)}{2}\Big)$.
\IF {$B=0$}
\STATE $I\leftarrow[a,z]$
\STATE $J\leftarrow\Big[G(a),\frac{G(a)+G(b)}{2}\Big]=\big[G(a),G(z)\big]$
\ELSE
\STATE $I\leftarrow[z,b]$
\STATE $J\leftarrow\Big[\frac{G(a)+G(b)}{2}, G(b)\Big]=\big[G(z),G(b)\big]$
\ENDIF
\ENDIF
\ENDLOOP
\end{algorithmic}
\end{algorithm}

\begin{theorem}\label{thm_bisect}
For the bisection Algorithm \ref{bisecto_algo} applied to any distribution with support on an interval of length $L$, and halted as soon as an interval of length less than or equal to $2\epsilon$ is reached, we have
\begin{enumerate}
\item[(i)] If $X_{\epsilon}$ denotes the center of the halting interval, then there exists a coupling between $X$ and $X_{\epsilon}$ such that $\|X-X_{\epsilon}\|\leq \epsilon$.
\item[(ii)] If $T_{\epsilon}$ denotes the number of bits used, then
\begin{displaymath}
\mathbf{E}(T_{\epsilon})\leq 3+\log_{2}^{+}\bigg(\frac{L}{2\epsilon}\bigg),
\end{displaymath}
where $\log_{2}^{+}(x)=\max\{0,\log_{2}(x)\}$.
\end{enumerate}
\end{theorem}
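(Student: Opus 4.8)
The plan is to prove the two parts with a single coupling that identifies the random bits consumed by the algorithm with one uniform variable. Let $B_1,B_2,\ldots$ be the unbiased bits drawn in successive iterations, set $U=\sum_{k\geq1}B_k2^{-k}$ (uniform on $[0,1]$), and put $X=G^{-1}(U)$, so that $X$ has distribution $G$. Writing $J_k$ for the probability interval and $I_k$ for the physical interval after $k$ iterations, the update rules give $J_0=[0,1]$, $I_0=[a,b]$, and $|J_k|=2^{-k}$ since each iteration halves $J$; moreover $J_k$ is exactly the order-$k$ dyadic interval with left endpoint $0.B_1\ldots B_k$, and the endpoints of $I_k$ are the $G^{-1}$-images of the endpoints of $J_k$. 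Because $U\in J_k$ and $G^{-1}$ is nondecreasing, $X=G^{-1}(U)\in I_k$ for every $k$; in particular $X$ lies in the halting interval $I_N$, which also contains its center $X_\epsilon$. The halting condition $|I_N|\leq 2\epsilon$ then forces $\|X-X_\epsilon\|\leq |I_N|/2\leq\epsilon$, proving (i). I expect the only delicate point here is that $G$ is continuous but possibly not strictly increasing: I would use the generalized inverse $G^{-1}(u)=\inf\{x:G(x)\geq u\}$ and invoke its monotonicity together with $G(G^{-1}(u))=u$ (valid by continuity of $G$) to justify both that $X$ has law $G$ and that the interval endpoints correspond correctly.

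For (ii), I would first observe that the number of bits $T_\epsilon$ equals the halting depth $N$, the first index $k$ with $|I_k|\leq 2\epsilon$, since every non-halting iteration consumes exactly one bit and the halting iteration consumes none. As the intervals are nested, $|I_k|$ is nonincreasing in $k$, so $\{N>k\}=\{|I_k|>2\epsilon\}$, and the tail-sum identity yields
\begin{displaymath}
\mathbf{E}(T_\epsilon)=\mathbf{E}(N)=\sum_{k\geq0}\mathbf{P}\big(|I_k|>2\epsilon\big).
\end{displaymath}
This reduces the problem to bounding, for each $k$, the probability that the depth-$k$ interval containing $U$ is still longer than $2\epsilon$.

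The crux is the estimate of $\mathbf{P}(|I_k|>2\epsilon)$, and I regard this as the main obstacle, since it is the step that converts the purely geometric halting rule into a logarithmic bound. At depth $k$ the algorithm occupies one of $2^k$ candidate intervals, each reached with probability $2^{-k}$ through the first $k$ bits. Let $S_k$ be the collection of those candidate intervals whose physical length exceeds $2\epsilon$, so that $\mathbf{P}(|I_k|>2\epsilon)=|S_k|\,2^{-k}$. The intervals in $S_k$ are pairwise disjoint subintervals of $[a,b]$, each of length greater than $2\epsilon$, hence at most $L/(2\epsilon)$ of them fit; combining this with the trivial bound $|S_k|\leq 2^k$ gives
\begin{displaymath}
\mathbf{P}\big(|I_k|>2\epsilon\big)\leq\min\!\left(1,\ \frac{L}{2\epsilon}\,2^{-k}\right).
\end{displaymath}

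Finally I would sum the series. Writing $m=L/(2\epsilon)$, the summands equal $1$ while $2^{-k}\geq 1/m$ and decay geometrically thereafter; splitting the sum near $k=\log_2 m$ leaves at most $\log_2^+ m+1$ terms equal to $1$ and a geometric tail bounded by $2$, so that
\begin{displaymath}
\mathbf{E}(T_\epsilon)\leq \log_2^+\!\left(\frac{L}{2\epsilon}\right)+1+2=3+\log_2^+\!\left(\frac{L}{2\epsilon}\right).
\end{displaymath}
When $m\leq 1$ the algorithm halts at the very first test and the bound holds trivially. This completes the plan.
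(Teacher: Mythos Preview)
Your proposal is correct and arrives at the same bound through the same key counting step as the paper: at depth $k$ the ``long'' intervals (length $>2\epsilon$) are pairwise non-overlapping subintervals of $[a,b]$, so there are fewer than $L/(2\epsilon)$ of them, and combining this with the trivial bound $2^{k}$ gives $\mathbf{P}(T_\epsilon>k)\leq\min(1,\frac{L}{2\epsilon}2^{-k})$, which sums to the stated bound. Part (i) is also handled the same way in spirit---both you and the paper couple $X$ to the infinite run of the bisection and observe that $X$ lies in every $I_k$---though your explicit construction of $U=\sum_k B_k 2^{-k}$ makes the coupling more transparent.

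The one genuine difference is in how you reach the sum $\sum_{k}N_k 2^{-k}$ (with $N_k$ the number of long depth-$k$ intervals). You get there in one line via the tail-sum identity $\mathbf{E}(T_\epsilon)=\sum_{k\geq0}\mathbf{P}(T_\epsilon>k)$ and the observation $\mathbf{P}(T_\epsilon>k)=N_k 2^{-k}$. The paper instead writes $\mathbf{E}(T_\epsilon)=\sum_{u\in\mathcal{L}}\mathrm{d}(u)2^{-\mathrm{d}(u)}$, expands $\mathrm{d}(u)$ as a sum over proper ancestors, swaps the order of summation, and invokes Kraft's inequality on each subtree to bound the inner sum by $1$, obtaining $\mathbf{E}(T_\epsilon)\leq\sum_{v\in\mathcal{I}}2^{-\mathrm{d}(v)}=\sum_\ell N_\ell 2^{-\ell}$. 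Since the bisection tree is full, Kraft actually holds with equality here, so the two routes coincide; your tail-sum argument is simply the more direct way to the same identity and avoids the tree-combinatorial detour. One small point to tighten in your write-up: the $2^{k}$ depth-$k$ intervals share endpoints, so ``pairwise disjoint'' should read ``with pairwise disjoint interiors''---the packing bound $|S_k|\cdot 2\epsilon<L$ still follows.
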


\begin{remark}
We note here that in general this bound cannot be improved by more than $3$ bits. Just consider the uniform distribution on $[0,L]$. Since all intervals have length to $\frac{L}{2^{i}}$ after $i$ were used, we have
\begin{displaymath}
T_{\epsilon}=\min\bigg\{i\geq 0:\phantom{1}\frac{L}{2^{i}}\leq 2\epsilon\bigg\}=\max\bigg\{0,\bigg\lceil\log_{2}\frac{L}{2\epsilon}\bigg\rceil\bigg\}.
\end{displaymath}
\end{remark}

\begin{proof}[Proof of Theorem \ref{thm_bisect}]
The bisection method yields, very naturally, a full binary tree, i.e., one in which all internal nodes have two children. Each internal node corresponds to a subinterval of $[a,b]$ of length greater than $2\epsilon$, the root represents the original interval $[a,b]$ of length $L$, and leaves represent intervals of length less than or equal to $2\epsilon$ that cause an exit.

Upon exit, the random variable $X_{\epsilon}$ can be coupled with $X$ such that $\|X-X_{\epsilon}\|\leq \epsilon$, because at every iteration, the random binary choice picks the correct interval, $[a,z]$ or $[z,b]$, with the correct probability $\rfrac{1}{2}$. One could thus define $X$ as the limit of $I$ when the algorithm is run without halting. Since $X_{\epsilon}$ is the midpoint of an interval of length at most $2\epsilon$ that also contains $X$, we must have $\|X-X_{\epsilon}\|\leq\epsilon$. This shows part (i).

To prove part (ii), let us denote the set of leaves by $\mathcal{L}$, and the set of internal nodes (i.e., all non-leaf nodes) by $\mathcal{I}$. The depth of node $u$ is denoted by $\mathrm{d}(u)$. It is of course possible that $\mathcal{I}$ and $\mathcal{L}$ are both infinite. However, one has that in all cases,
\begin{displaymath}
\sum_{u\in\mathcal{L}}{\frac{1}{2^{\mathrm{d}(u)}}}\leq 1,
\end{displaymath}
because the leaves form a non-overlapping covering of $[a,b]$ so that the bisection method---a random walk started at the root and halted when a leaf is reached---always stops. Also,
\begin{displaymath}
\mathbf{E}(T_{\epsilon})=\sum_{u\in\mathcal{L}}{\frac{\mathrm{d}(u)}{2^{\mathrm{d}(u)}}}.
\end{displaymath}
In the next chain of inequalities, we define
\begin{displaymath}
N_{\ell}=\sum_{v\in\mathcal{I}}{\mathds{1}_{\{\mathrm{d}(v)=\ell\}}},\phantom{1}\ell\geq 0,
\end{displaymath}
i.e., the number of internal nodes at depth $\ell$ in the tree. Using Kraft's inequality (see, e.g., Cover and Thomas \cite{CoverThomas1991}), which states that for any binary tree,
\begin{displaymath}
\sum_{u\in\mathcal{L}}{\frac{1}{2^{\mathrm{d}(u)}}}\leq 1,
\end{displaymath}
we have, with $A(u)$ denoting the set of ancestors $u$ and $D(v)$ denoting the set of descendants of $v$, that, since $d(u)=\sum{\mathds{1}\{v\in A(u)\setminus\{u\}\}}$,
\begin{align*}
\mathbf{E}(T_{\epsilon})&=\sum_{u\in\mathcal{L}}\mathop{\sum_{v \in A(u)}}_{v\neq u}{\frac{1}{2^{\mathrm{d}(v)}}\frac{1}{2^{\mathrm{d}(u)-\mathrm{d}(v)}}}\\
&=\sum_{v\in\mathcal{I}}{\frac{1}{2^{\mathrm{d}(v)}}}\mathop{\sum_{u\in D(v)}}_{u\in\mathcal{L}}{\frac{1}{2^{\mathrm{d}(u)-\mathrm{d}(v)}}}\\
&\leq \sum_{v\in\mathcal{I}}{\frac{1}{2^{\mathrm{d}(v)}}}\phantom{1}\text{(by Kraft's inequality)}\\
&=\sum_{\ell=0}^{\infty}{\frac{N_{\ell}}{2^{\ell}}}.
\end{align*}
Observe that at depth $\ell$, all intervals associated with nodes are disjoint, and thus, since each interval node corresponds to an interval strictly larger than $2\epsilon$,
\begin{displaymath}
N_{\ell}<\frac{L}{2\epsilon}.
\end{displaymath}
Also, $N_{\ell}\leq 2^{\ell}$ because we have a binary tree. Hence,
\begin{displaymath}
N_{\ell}\leq \min\big\{\lfloor\rfrac{L}{2\epsilon}\rfloor,2^{\ell}\big\}.
\end{displaymath}
We deduce, using
\begin{displaymath}
\ell_{0}=\max\bigg\{0,\bigg\lceil\log_{2}\frac{L}{2\epsilon}\bigg\rceil\bigg\},
\end{displaymath}
that
\begin{align*}
\sum_{\ell=0}^{\infty}{\frac{N_{\ell}}{2^{\ell}}}&\leq\sum_{\ell=0}^{\ell_{0}}{1}+\sum_{\ell=\ell_{0}+1}^{\infty}{\bigg\lfloor\frac{L}{2\epsilon}\bigg\rfloor\frac{1}{2^{\ell}}}\\
&=\ell_{0}+1+\bigg\lfloor\frac{L}{2\epsilon}\bigg\rfloor\frac{1}{2^{\ell_0}}\\
&\leq\log_{2}^{+}\bigg(\frac{L}{2\epsilon}\bigg)+3.
\end{align*}
To see this, treat the cases $L<2\epsilon$ and $L\geq 2\epsilon$ separately.
\end{proof}


\section{A rejection algorithm for densities with compact support}\label{sect:compactcase}

In this section, we assume that $f$ is Riemann-integrable and supported on $[0,1]^{d}$. Note that this is equivalent to the assumption that $f$ is almost-everywhere continuous, bounded, and supported on $[0,1]^{d}$. Our algorithm requires an oracle---a black box---that for any closed rectangle $R\subseteq\mathbb{R}^{d}$ gives
\begin{displaymath}
\sup_{x\in R}{f(x)}\,\text{ and }\inf_{x\in R}{f(x)}.
\end{displaymath}
If $R=\{x\}$, then that oracle coincides with a standard function evaluation. Without the possibility of computing infimum and supremum of the density $f$ over compact subintervals of the domain of $f$, sampling absolutely continuous distribution using the rejection method seems to be impossible in total generality. We will also track complexity in terms of the number of uses of the oracle before halting, and call it $T_{\epsilon}$. One use of the oracle reveals
\begin{displaymath}
C\isdef\sup_{x\in[0,1]^{d}}{f(x)},
\end{displaymath}
which is a finite number by assumption (Riemann-integrable functions are bounded by definition). At once, we have a simple bound for applying the rejection method: $f(x)\leq C$. Algorithm \ref{orig_vnabc} shows the original algorithm by von Neumann \cite{vN51} (see also Devroye \cite{Devroye1986}) of the standard rejection algorithm.


\begin{algorithm}[h!]
\caption{Von Neumann's original rejection algorithm}\label{orig_vnabc}
\begin{algorithmic}[1]
\LOOP
\STATE Generate $X$ uniformly on $[0,1]^{d}$.
\STATE Generate $U$ uniformly on $[0,1]$.
\IF{$UC\leq f(X)$}
\RETURN $X$
\ENDIF
\ENDLOOP
\end{algorithmic}
\end{algorithm}

Since we cannot generate $X$ and $U$ with infinite precision, at least two modifications are needed. One modification is to take into account the precision $\epsilon$ desired for $X$, and another modification is to take into account that bits of $U$ are generated sequentially. Appendix \ref{app_b}, which is in this article for pedagogical purpose, gives more insights on wrong and naive modifications that someone could be tempted to do. We can consider the initial rectangle
\begin{displaymath}
R_{0}=[0,1]^{d}\times [0,C],
\end{displaymath}
and its $2^{d+1}$ child rectangles defined by the $2^{d+1}$ quadrants centered at the center $x_{0}$ of $R_{0}$:
\begin{displaymath}
x_{0}=\Big(\frac{1}{2},\frac{1}{2},\ldots,\frac{1}{2},\frac{C}{2}\Big).
\end{displaymath}
In the data structure literature, such a partition, when applied recursively, leads to a quadtree (see, e.g., Samet \cite{samet2006foundations}).
Any subsequent rectangle can be split again about its center point, and so forth, in the manner of infinite quadtree $Q$ on $R_{0}\subseteq \mathbb{R}^{d}$ as illustrated by Figures \ref{fig_2dim_decomp_for_quadtree} and \ref{fig_quadtree_decomp_example}.

\begin{flushright}
\color{gray}{\textbf{The figures are at the beginning of the next page.}}\color{black}{}
\end{flushright}
\vfill
\clearpage

\begin{figure}[h!]
\begin{centering}
\includegraphics[trim = 10mm 182mm 20mm 33mm]{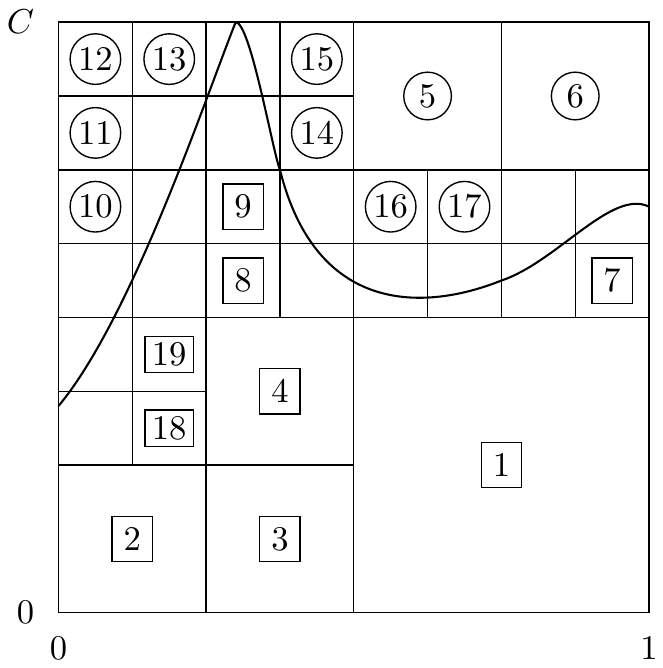}
\caption{Decomposition of $[0,1]\times [0,C]$ into its quadtree tree structure.}\label{fig_2dim_decomp_for_quadtree}
\vspace{5mm}
\includegraphics[trim = 40mm 212mm 20mm 33mm]{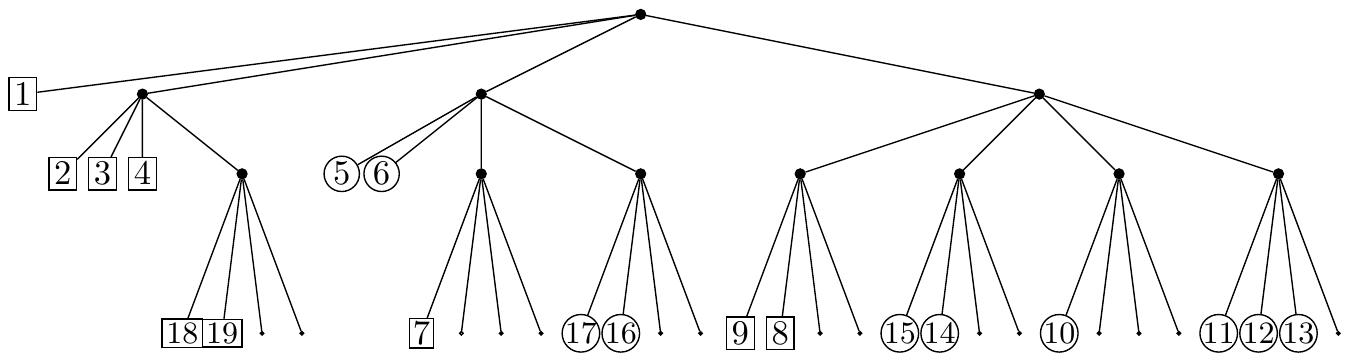}
\caption{Quadtree for decomposition on Figure \ref{fig_2dim_decomp_for_quadtree}.}\label{fig_quadtree_decomp_example}
\end{centering}
\end{figure}

In von Neumann's algorithm, deciding if $UC\leq f(X)$ for $(X,U)\in R_{0}$ is equivalent to finding a rectangle $R$ in the quadtree $Q$ with the property that either
\begin{align*}
&R\subseteq \big\{(x,y)\in R_{0}:\phantom{1} y\leq f(x)\big\}\phantom{12}\text{(we accept since $UC\leq f(X)$)}
\end{align*}
or
\begin{align*}
&R\subseteq \big\{(x,y)\in R_{0}:\phantom{1} y> f(x)\big\}\phantom{12}\text{(we reject since $UC> f(X)$)}.
\end{align*}
However, this must be done carefully, without overlapping rectangles. Thus, we must trim $Q$, and associate the exiting rectangles $R$ with the leaves. Thus,
\begin{align}
\big\{(x,y)\in R_{0}:\phantom{1} y\leq f(x)\big\}&=\bigcup\{R:\phantom{1}\text{$R$ is an accepting rectangle}\},\label{accept_rect}
\end{align}
and
\begin{align}
\big\{(x,y)\in R_{0}:\phantom{1} y> f(x)\big\}&=\bigcup\{R:\phantom{1}\text{$R$ is a rejecting rectangle}\}.\label{reject_rect}
\end{align}

Below, we will see that Riemann-integrability of $f$ suffices for this decomposition. When a rejecting rectangle is found, the procedure is repeated. When an accepting rectangle is found, say,
\begin{displaymath}
\prod_{i=1}^{d}{[a_i,b_i]}\times [\alpha,\beta],
\end{displaymath}
it suffices to generate $X_{\epsilon}\in\prod_{i=1}^{d}{[a_i,b_i]}$ such that
\begin{align}
\|X^{\star}-X_{\epsilon}\|\leq\epsilon\label{xstar_xepsilon}
\end{align}
where $X^{\star}$ is uniform on $\prod_{i=1}^{d}{[a_i,b_i]}$ and coupled with $X_{\epsilon}$. It is easy to see by the triangle inequality that $X_{\epsilon}$ is then coupled with $X$ having density $f$ such that
\begin{displaymath}
\|X-X_{\epsilon}\|\leq \epsilon.
\end{displaymath}
To achieve (\ref{xstar_xepsilon}), use bisection for each dimension separately. By Theorem \ref{thm_bisect}, the expected number of bits needed is bounded by
\begin{align*}
3+\sum_{i=1}^{d}{\log_{2}^{+}\bigg(\frac{b_i-a_i}{\epsilon}\bigg)}&=3+\mathop{\sum_{i=1}^{d}}_{b_i-a_i>\epsilon}{\log_{2}\bigg(\frac{b_i-a_i}{\epsilon}\bigg)}\\
&\leq 3+d\log_{2}\bigg(\frac{1}{\epsilon}\bigg)
\end{align*}
for $\epsilon\leq 1$. If $\epsilon>1$, then bisection requires no bit, so that we conclude that the expected number of bits does not exceed
\begin{displaymath}
3+d\log_{2}^{+}\bigg(\frac{1}{\epsilon}\bigg).
\end{displaymath}

We note that the checks
\begin{align}
&R\subseteq \{(x,y)\in\mathbb{R}^{d}\times\mathbb{R}:\phantom{1}f(x)\leq y\}\label{check_one}
\end{align}
and
\begin{align}
&R\subseteq \{(x,y)\in\mathbb{R}^{d}\times\mathbb{R}:\phantom{1}f(x) > y\}\label{check_two}
\end{align}
can be carried out using our oracle. Let

\begin{align*}
f^{+}&=\sup\big\{f(x):\phantom{1}(x,y)\in R\text{ for some $y$}\big\},\\
f^{-}&=\inf\big\{f(x):\phantom{1}(x,y)\in R\text{ for some $y$}\big\},\\
y^{-}&=\inf\{y:\phantom{1}(x,y)\in R\text{ for some $x$}\big\},\\
y^{+}&=\sup\{y:\phantom{1}(x,y)\in R\text{ for some $x$}\big\}.
\end{align*}
Then (\ref{check_one}) holds if $f^{+}\leq y^{-}$, and (\ref{check_two}) holds if $f^{-}\geq y^{+}$.


\begin{algorithm}[h!]
\caption{Generation of an $\epsilon$-approximate random variable with density on $[0,1]^{d}$}\label{algo_reject_compact}
\begin{algorithmic}[1]
\STATE $R\leftarrow [0,1]^{d}\times\big[0,\mathop{\sup_{}}_{x\in[0,1]^{d}}{f(x)}\big]$\label{init_line_reject_algo}
\STATE $\texttt{Decision}\leftarrow \texttt{None}$
\REPEAT
\STATE Call the oracle that returns $\inf_{x\in R^{\star}} {f}(x)$ and $\sup_{x\in R^{\star}} {f}(x)$ which are in turn used by the following branching statement. \COMMENT{Here and below $R^{\star}$ denotes the projection of $R$ onto $\mathbb{R}^{d}$, i.e., $R^{\star}=\{x:\phantom{1}(x,y)\in R\text{ for some $y$}\}$.}
\IF {$R\subseteq \{(x,y)\in \mathbb{R}^{d}\times \mathbb{R}:\phantom{1}f(x)\leq y\}$}
\STATE $\texttt{Decision}\leftarrow \texttt{Accept}$
\ELSIF {$R\subseteq \{(x,y)\in \mathbb{R}^{d}\times \mathbb{R}:\phantom{1}f(x)> y\}$}
\STATE $\texttt{Decision}\leftarrow \texttt{Reject}$
\ELSE
\STATE $x^{\star}\leftarrow \text{center of }R$
\STATE Select one vertex $v$ of $R$ uniformly at random and replace $R$ by the rectangle with $v$ and $x^{\star}$ as opposing vertices.
\ENDIF
\UNTIL{$\texttt{Decision} \neq \texttt{None}$}
\IF{$\texttt{Decision}=\texttt{Reject}$}
\STATE Goto line (\ref{init_line_reject_algo}) \COMMENT{Restart the algorithm an average of $\sup_{x\in[0,1]^{d}}{f(x)}$ times.}
\ELSE
\STATE Use bisection to generate an $\epsilon$-approximation $X_{\epsilon}$ of a uniform variable in \mbox{$R^{\star}$}.\label{bisecto_sub_routine_line_reject_algo}
\RETURN $X_{\epsilon}$
\ENDIF
\end{algorithmic}
\end{algorithm}

\begin{theorem}\label{thm_quadtree}
Let $f$ be a Riemann-integrable density on $[0,1]^{d}$. Then the quadtree $Q$ partitions $\textstyle{R_{0}\isdef[0,1]^{d}\times [0,\sup{f}]}$ in a collection of leaf rectangles $R$ for which (\ref{accept_rect}) and (\ref{reject_rect}) hold. Thus, Algorithm \ref{algo_reject_compact} halts with probability one and delivers an $\epsilon$-approxiation $X_{\epsilon}$ of $X$, a random variable with density $f$.
\end{theorem}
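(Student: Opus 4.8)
The plan is to split the statement into a deterministic geometric part---that the trimmed quadtree $Q$ classifies almost every point of $R_{0}$ as lying either below or above the graph of $f$, giving (\ref{accept_rect}) and (\ref{reject_rect})---and a probabilistic part---that Algorithm \ref{algo_reject_compact} then halts with probability one and returns a correctly coupled $\epsilon$-approximation. The key observation, already recorded before the algorithm, is that a quadtree rectangle $R$ is \emph{classifiable} by the oracle exactly when it lies below the graph ($y^{+}\leq f^{-}$, an accepting rectangle, cf.\ (\ref{check_two})) or above it ($y^{-}\geq f^{+}$, a rejecting rectangle, cf.\ (\ref{check_one})); a node is refined precisely when neither holds, i.e.\ when $R$ straddles the graph $\Gamma=\{(x,f(x)):x\in[0,1]^{d}\}$. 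I write $D\subseteq[0,1]^{d}$ for the set of discontinuity points of $f$.

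First I would prove the coverage claim. The heart is a resolution lemma: if $x_{0}\notin D$ and $y_{0}\neq f(x_{0})$, then the nested quadtree rectangles shrinking to $(x_{0},y_{0})$ become classifiable at some finite depth. Indeed, assuming $y_{0}<f(x_{0})$ and setting $\delta=(f(x_{0})-y_{0})/2>0$, continuity of $f$ at $x_{0}$ gives a neighbourhood on which $f>f(x_{0})-\delta$, so any sufficiently small rectangle $R\ni(x_{0},y_{0})$ has $f^{-}\geq f(x_{0})-\delta>y_{0}$ and, since its height tends to $0$, eventually $y^{+}\leq f^{-}$; thus $R$ is accepting. The case $y_{0}>f(x_{0})$ is symmetric. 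Consequently the set of never-classified points lies in $\Gamma\cup(D\times[0,C])$. I would then check both pieces are Lebesgue-null in $\mathbb{R}^{d+1}$: Riemann integrability yields, for every $\eta>0$, a finite partition of $[0,1]^{d}$ into subrectangles $Q_{j}$ with $\sum_{j}(\sup_{Q_{j}}f-\inf_{Q_{j}}f)\,|Q_{j}|<\eta$, and $\Gamma\subseteq\bigcup_{j}Q_{j}\times[\inf_{Q_{j}}f,\sup_{Q_{j}}f]$ has total volume below $\eta$; while Lebesgue's criterion makes $D$ null in $[0,1]^{d}$, whence $D\times[0,C]$ is null. Since the leaf rectangles are countable and pairwise non-overlapping by the quadtree construction, and every point off this null set sits in a unique accepting or rejecting leaf according to the sign of $f(x)-y$, the identities (\ref{accept_rect}) and (\ref{reject_rect}) hold up to the null exceptional set---which is all the probabilistic conclusions require.

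Next I would handle halting. The inner \textbf{repeat} loop is a random walk down $Q$ that at each unclassified node selects one of the $2^{d+1}$ equal-volume children uniformly; the nested rectangles shrink to a limiting point $(X^{\star},Y^{\star})$ that is uniform on $R_{0}$, and the loop fails to terminate exactly when every rectangle in this chain is unclassified, i.e.\ when $(X^{\star},Y^{\star})$ lies in the null set above. Hence the inner loop halts almost surely. For the outer loop, a terminating pass accepts with probability equal to the volume of $\{y\leq f(x)\}$ divided by $C$, namely $(\int f)/C=1/C>0$; successive passes are independent, so the probability of rejecting forever is $\lim_{n}(1-1/C)^{n}=0$ and the algorithm halts with probability one, restarting on average $C$ times as claimed.

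Finally, correctness of the output: conditioned on acceptance in a leaf $\prod_{i}[a_{i},b_{i}]\times[\alpha,\beta]$, the uniform law of $(X^{\star},Y^{\star})$ on $R_{0}$ restricted to the accepting region has $x$-marginal proportional to the fibre height, i.e.\ density $f$, and within the leaf $X^{\star}$ is uniform on $\prod_{i}[a_{i},b_{i}]$. Theorem \ref{thm_bisect}(i) supplies $X_{\epsilon}$ with $\|X^{\star}-X_{\epsilon}\|\leq\epsilon$ under a coupling, and the triangle-inequality argument given before the algorithm then yields a coupling of $X_{\epsilon}$ with $X\sim f$ satisfying $\|X-X_{\epsilon}\|\leq\epsilon$. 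The main obstacle is the coverage step: extracting from mere Riemann integrability the nullity of the unclassifiable region---in particular controlling the fibres over discontinuities of $f$---and then converting ``null unclassifiable region'' into almost-sure termination via the fact that the limiting point of the quadtree walk is uniform on $R_{0}$.
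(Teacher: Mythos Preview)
Your argument is correct, but the route differs from the paper's. You work \emph{pointwise}: for each $(x_{0},y_{0})$ with $x_{0}$ a continuity point of $f$ and $y_{0}\neq f(x_{0})$, you use continuity to show the shrinking quadtree boxes eventually classify it, and then you invoke Lebesgue's criterion to make the exceptional set $\Gamma\cup(D\times[0,C])$ null. The paper instead works \emph{level by level}: it counts the number $N_{k}$ of level-$k$ cells that straddle the graph, bounds this by $\frac{I_{k}^{+}-I_{k}^{-}}{C}\,2^{(d+1)k}+2\cdot 2^{dk}$ using the upper and lower Riemann sums $I_{k}^{\pm}$, and deduces $\mathbf{P}\{T>k\}\leq 2\cdot 2^{-k}+(I_{k}^{+}-I_{k}^{-})/C\to 0$. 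Your approach is arguably cleaner conceptually and makes the role of the discontinuity set explicit; the paper's approach is more quantitative and has the advantage that the explicit tail bound on $\mathbf{P}\{T>k\}$ is exactly what is summed in Theorem~\ref{thm_compact_case} to control $\mathbf{E}(T)$ via $\mathcal{A}(f)=\sum_{k}(I_{k}^{+}-I_{k}^{-})$. If you proceed with your proof, you would need a separate argument for that later expected-cost bound. One small imprecision: your ``exactly when'' in the halting paragraph overstates things---non-termination implies the limit point lies in the null set, but not conversely---though only the forward inclusion is needed, so the conclusion stands.
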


\begin{proof}[Proof of Theorem \ref{thm_quadtree}]
Let $T$ be the number of iterations of the algorithm before halting. In other words, $T$ is the depth of the leaf rectangle $R$ reached by randomly walking down the quadtree. That walk costs $T(d+1)$ random bits. We show that
\begin{displaymath}
\lim_{k\to\infty}{\mathbf{P}\{T>k\}}=0,
\end{displaymath}
and thus, (\ref{accept_rect}) and (\ref{reject_rect}) must hold. In the partition of $R_{0}$ into $2^{(d+1)k}$ level-$k$ rectangles (each of Lebesgue measure $\squash{\frac{1}{2^{k}}}\squash{\frac{1}{2^{k}}}\cdots\squash{\frac{1}{2^{k}}}\squash{\frac{C}{2^{k}}}$), there are $N_{k}$ cells $R$---those for which we cannot decide---that are ``visited'' by $f$, i.e.\, for which
\begin{align*}
&\sup_{(x,y)\in R}{f(x)}\geq \inf\{y:\phantom{1}(x,y)\in R\text{ for some $x$}\}
\end{align*}
and
\begin{align*}
&\inf_{(x,y)\in R}{f(x)}\leq \sup\{y:\phantom{1}(x,y)\in R\text{ for some $x$}\}.
\end{align*}
Then
\begin{displaymath}
\mathbf{P}\{T>k\}=\frac{N_k}{2^{(d+1)k}}.
\end{displaymath}

For every rectangle $R$, let us define its projection, $R^{\star}$, onto $\mathbb{R}^{d}$, i.e.\,
\begin{displaymath}
R^{\star}=\{x:\phantom{1}(x,y)\in R\text{ for some $y$}\}.
\end{displaymath}

Then, for a fixed dimension, we can group the $2^{k}$ cells $R^{\star}$ with the same projection and verify that of these $2^{k}$ cells, at most
\begin{displaymath}
\bigg(\frac{\mathop{\sup_{}}_{x\in R^{\star}}{f(x)}-\mathop{\inf_{x\in R^{\star}}}_{}{f(x)}}{C}\bigg)2^{k}+2
\end{displaymath}
are visited by $f$. Since there are $2^{dk}$ rectangles $R^{\star}$, let $\mathcal{P}_{k}^{\star}$ be the collection of all projections $R^{\star}$, and then
\begin{align*}
N_{k}&\leq \sum_{R^{\star}\in\mathcal{P}_{k}^{\star}}{\Bigg(\bigg(\frac{\sup_{x\in R^{\star}}{f(x)}-\inf_{x\in R^{\star}}{f(x)}}{C}\bigg)2^{k}+2\Bigg)}\\
&=\frac{\big(I^{+}-I^{-}\big)}{C}2^{(d+1)k}+2\cdot2^{dk},
\end{align*}
where we use the Riemann approximations $I^{+}$ and $I^{-}$ of the integral of $f$:
\begin{align*}
I_{k}^{+}&=\sum_{R^{\star}\in\mathcal{P}_{k}^{\star}}{\Big(\sup_{x\in R^{\star}}{f(x)}\Big)\lambda(R^{\star})},\\
I_{k}^{-}&=\sum_{R^{\star}\in\mathcal{P}_{k}^{\star}}{\Big(\inf_{x\in R^{\star}}{f(x)}\Big)\lambda(R^{\star})},\\
\lambda(R^{\star})&=\text{Lebesgue measure of $R^{\star}$}=\frac{1}{2^{dk}}.
\end{align*}
Therefore,
\begin{displaymath}
\mathbf{P}\{T>k\}\leq \frac{2}{2^{k}}+\frac{I_{k}^{+}-I_{k}^{-}}{C}.
\end{displaymath}
Since $f$ is Riemann-integrable, this tends to $0$ as $k\to\infty$.

\end{proof}

We call $f$ a monotone density on $[0,1]^{d}$ if it decreases along at least one of the dimensions, i.e., there exists $i\in\{1,\ldots,d\}$ such that for all vectors \mbox{$(x_1,\ldots,x_i,\ldots,x_d)\in[0,1]^{d}$}, \mbox{$(x_1,\ldots,x'_i,\ldots,x_d)\in[0,1]^{d}$}, \mbox{$x_i\leq x'_i$}, then \mbox{$f(x_1,\ldots,x_i,\ldots,x_d)\geq f(x_1,\ldots,x'_i,\ldots,x_d)$}. As before, let $N_{k}$ be the number of cells at level $k$ that are visited by $f$. Then,
\begin{displaymath}
N_{k}\leq 2\cdot2^{k}\cdot2^{(d-1)k}
\end{displaymath}
because the domain of $f$ is divided into $2^{dk}$ cells and the $2^{k}$ cells along the $i^{\text{th}}$ dimension give rise to a walk. This walk along the $i^{\text{th}}$ is at most of length $2\cdot2^{k}$ as illustrated on Figure \ref{fig_mono_curve}.
\begin{figure}[h!]
\begin{centering}
\includegraphics[trim = 28mm 147mm 14mm 35mm]{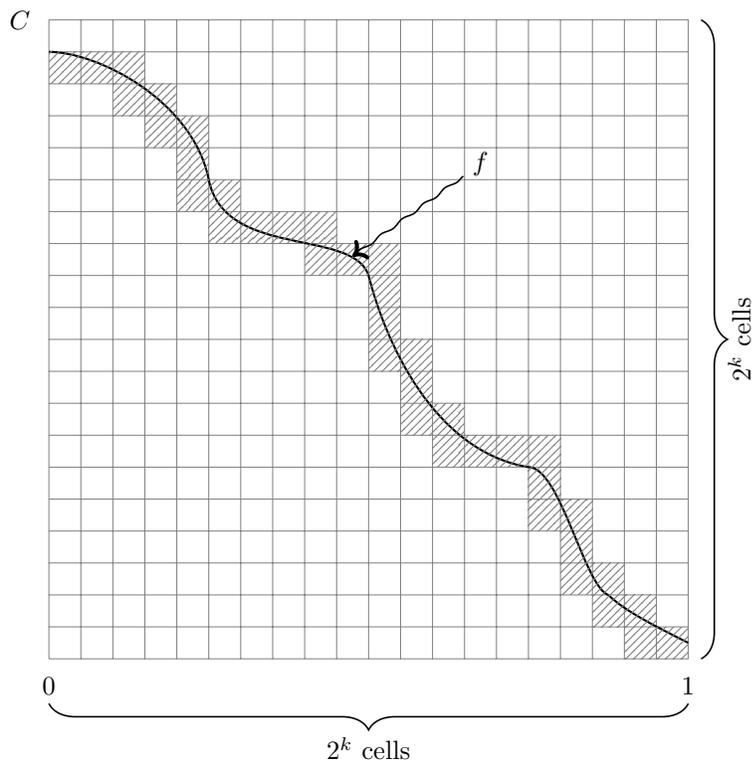}
\caption{The number of cells visited by the monotone curve $f$ is at most $2\cdot2^{k}$ cells.}\label{fig_mono_curve}
\end{centering}
\end{figure}

We have
\begin{displaymath}
\mathbf{P}\{T>k\}=\frac{N_k}{2^{(d+1)k}}\leq\frac{2}{2^{k}},\phantom{1}k\geq 0,
\end{displaymath}
and thus,
\begin{displaymath}
\mathbf{E}(T)=\sum_{k=0}^{\infty}{\mathbf{P}\{T>k\}}\leq 4.
\end{displaymath}
In other words, for monotone densities, the inner loop of the algorithm has a uniform performance guarantee.

For a complete analysis of algorithm $A$, we need to consider the total number $N$ of trials before deciding. The number of uses of the oracle is given by
\begin{displaymath}
\sum_{i=1}^{N}{T_{i}},
\end{displaymath}
where $T_i$ is the number of iterations in the $i$-th trial---these $T_i$'s are i.i.d. The number of random bits used is
\begin{displaymath}
(d+1)\sum_{i=1}^{N}{T_i}
\end{displaymath}
during the first phase (the decision phase), and is bounded by
\begin{displaymath}
3+d\log_{2}^{+}\bigg(\frac{1}{2\epsilon}\bigg)
\end{displaymath}
in the second phase (the bisection phase). Since $\mathbf{E}(N)=C=\sup_{x\in[0,1]^{d}}{f(x)}$, we see that the expected number of uses of the oracle is
\begin{displaymath}
C\mathbf{E}(T)
\end{displaymath}
and that the expected number of random bits required is bounded from above by
\begin{displaymath}
(d+1)C\mathbf{E}(T)+3+d\log_{2}^{+}\bigg(\frac{1}{2\epsilon}\bigg).
\end{displaymath}
As noted earlier, for any coordinate-wise monotone density on $[0,1]^{d}$,
\begin{displaymath}
\mathbf{E}(T)\leq 4.
\end{displaymath}
However, for general Riemann-integrable densities we cannot insure that $\mathbf{E}(T)$ converges. We will address that point below.

\begin{theorem}\label{thm_compact_case}
Let $f$ be a Riemann-integrable density on $[0,1]^{d}$, with \mbox{$C\isdef\sup_{x\in[0,1]^{d}}{f(x)}$}.
\begin{enumerate}
\item If $f$ is monotone in at least one coordinate, then the expected number of uses of the oracle is not more than
\begin{displaymath}
4C,
\end{displaymath}
and the expected number of bits needed to generate an $\epsilon$-approximation $X_{\epsilon}$ is not more than
\begin{displaymath}
4C(d+1)+3+d\log_{2}^{+}\bigg(\frac{1}{2\epsilon}\bigg).
\end{displaymath}
\item If $I_{k}^{+}$ and $I_{k}^{-}$ are the Riemann approximation of $\int{f}$ for regular grids of size $2^{dk}$ i.e., each coordinate is split into $2^{k}$ equal intervals, then the expected number of uses of the oracle is not more than
\begin{displaymath}
4C+\mathcal{A}(f),
\end{displaymath}
where
\begin{displaymath}
\mathcal{A}(f)\isdef \sum_{k=0}^{\infty}{\big(I_{k}^{+}-I_{k}^{-}\big)},
\end{displaymath}
and the expected number of random bits used by Algorithm \ref{algo_reject_compact} is not more than
\begin{displaymath}
4C(d+1)+(d+1)\mathcal{A}(f)+3+d\log_{2}^{+}\bigg(\frac{1}{2\epsilon}\bigg).
\end{displaymath}
\end{enumerate}
\end{theorem}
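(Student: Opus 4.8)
The plan is to reduce both parts to the per-trial analysis already carried out above, via Wald's identity, and then to insert the two tail bounds on $T$ that precede the theorem. I would first fix the cost bookkeeping. The algorithm performs an i.i.d.\ sequence of decision trials; the $i$-th trial walks down the quadtree to a leaf, costing $T_i$ oracle calls and, since every internal node has $2^{d+1}$ children, exactly $(d+1)T_i$ random bits. Let $D_i\in\{\texttt{Accept},\texttt{Reject}\}$ be the outcome of the $i$-th trial and let $N=\inf\{i:D_i=\texttt{Accept}\}$ be the index of the halting trial. Because a walk run to infinity samples a uniform point in $R_{0}$, whose accepting region $\{y\le f(x)\}$ has measure $\int f=1$ while $\lambda(R_{0})=C$, each trial accepts independently with probability $1/C$; hence $N$ is geometric with $\mathbf{E}(N)=C$. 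The total oracle count is $\sum_{i=1}^{N}T_i$ and the decision-phase bit count is $(d+1)\sum_{i=1}^{N}T_i$, after which a single bisection phase uses no oracle call and, by Theorem~\ref{thm_bisect}, at most $3+d\log_{2}^{+}(1/(2\epsilon))$ expected bits.

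The key reduction is the factorization $\mathbf{E}\big(\sum_{i=1}^{N}T_i\big)=\mathbf{E}(N)\,\mathbf{E}(T)=C\,\mathbf{E}(T)$. I would justify it by Wald's identity applied to the filtration $\mathcal{F}_n=\sigma\big((T_1,D_1),\dots,(T_n,D_n)\big)$: each $T_n$ is $\mathcal{F}_n$-measurable, the event $\{N\le n\}$ lies in $\mathcal{F}_n$ so that $N$ is a stopping time, and the next trial $T_{n+1}$ is independent of $\mathcal{F}_n$. With $\mathbf{E}(N)=C<\infty$ this yields the factorization, so the expected oracle count is $C\,\mathbf{E}(T)$ and the expected bit count is $(d+1)C\,\mathbf{E}(T)+3+d\log_{2}^{+}(1/(2\epsilon))$. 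Everything is now reduced to bounding $\mathbf{E}(T)=\sum_{k\ge0}\mathbf{P}\{T>k\}$.

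For part (1) I would use the monotone-density estimate $\mathbf{P}\{T>k\}\le 2/2^{k}$ established above, giving $\mathbf{E}(T)\le\sum_{k\ge0}2\cdot2^{-k}=4$; substituting produces the oracle bound $4C$ and the bit bound $4C(d+1)+3+d\log_{2}^{+}(1/(2\epsilon))$. For part (2) I would use the Riemann estimate $\mathbf{P}\{T>k\}\le 2/2^{k}+(I_{k}^{+}-I_{k}^{-})/C$, whence $\mathbf{E}(T)\le 4+\mathcal{A}(f)/C$ with $\mathcal{A}(f)\isdef\sum_{k\ge0}(I_{k}^{+}-I_{k}^{-})$; substituting gives oracle count $\le C(4+\mathcal{A}(f)/C)=4C+\mathcal{A}(f)$ and bit count $\le(d+1)(4C+\mathcal{A}(f))+3+d\log_{2}^{+}(1/(2\epsilon))$, which is the stated expression.

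The step I expect to need the most care is the Wald factorization, precisely because within a single trial the depth $T_i$ and the decision $D_i$ are dependent, so $N$ is correlated with the realized values of $T_1,\dots,T_N$; one must therefore not appeal to independence of $N$ from the individual $T_i$, but rather verify the stopping-time and adaptedness conditions above, which hold because distinct trials are independent. A secondary point worth stating is that in part (2) the bound is informative only when $\mathcal{A}(f)<\infty$: Theorem~\ref{thm_quadtree} guarantees $\mathbf{P}\{T>k\}\to0$ and hence $T<\infty$ almost surely, but it does not force $\mathbf{E}(T)<\infty$, so convergence of $\sum_{k}(I_{k}^{+}-I_{k}^{-})$ is exactly the extra hypothesis that makes the right-hand side finite.
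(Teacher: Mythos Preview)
Your proposal is correct and follows essentially the same route as the paper: the paper's own proof is the one-liner ``recall the estimates of $\mathbf{E}(T)$ obtained above and recall the upper bound $\mathbf{P}\{T>k\}$ in terms of $I_{k}^{+}-I_{k}^{-}$,'' and you have simply spelled out those estimates together with the factorization $\mathbf{E}\big(\sum_{i\le N}T_i\big)=C\,\mathbf{E}(T)$ that the surrounding text asserts without justification. Your explicit verification of the Wald conditions (stopping time for $N$, independence across trials despite the within-trial dependence of $T_i$ and $D_i$) is a welcome addition of rigor rather than a different argument.
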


\begin{proof}[Proof of Theorem \ref{thm_compact_case}]
Just recall the estimates of $\mathbf{E}(T)$ obtained above and recall the upper bound $\mathbf{P}\{T>k\}$ in terms of $I_{k}^{+}-I_{k}^{-}$.
\end{proof}

\begin{remark}
Part (2) of the Theorem \ref{thm_compact_case} is only useful if $\mathcal{A}(f)<\infty$. For most densities, $\mathcal{A}(f)<\infty$, as can be seen from this simple sufficient condition. Let the modulus of continuity be
\begin{displaymath}
\omega(\delta)=\sup_{\|x-y\|\leq\delta}{|f(x)-f(y)|},\phantom{1}\delta>0.
\end{displaymath}
We have $\mathcal{A}(f)<\infty$ if
\begin{align*}
\hspace{0.5in}\sum_{k=0}^{\infty}{\omega\bigg(\frac{\sqrt{d}}{2^{k}}\bigg)}&<\infty
\end{align*}
because
\begin{align*}
I_{k}^{+}-I_{k}^{-}&=\frac{1}{2^{dk}}\sum_{R^{\star}\in\mathcal{P}_{k}^{\star}}{\bigg(\sup_{x\in R^{\star}}{f}(x)-\inf_{x\in R^{\star}}{f}(x)\bigg)}\\
&\leq \frac{1}{2^{dk}}\sum_{R^{\star}\in\mathcal{P}_{k}^{\star}}{\sup_{x,x'\in R^{\star}}|f(x)-f(x')|}\\
&\leq \omega\bigg(\frac{\sqrt{d}}{2^{k}}\bigg).
\end{align*}
It suffices that as $\delta\downarrow 0$, $\omega(\delta)=O\big(\rfrac{1}{\log^{1+\alpha}(\delta^{-1})}\big)$ or $\omega(\delta)=O(\delta^{\alpha})$ for some $\alpha>0$.
\end{remark}

\begin{remark}
The number of bits used by our algorithm behaves as $\textstyle{d\log_{2}^{+}\big(\frac{1}{\epsilon}\big)+O(1)}$ as $\epsilon\downarrow 0$, and thus matches the lower bound mentioned earlier.
\end{remark}

\section{A rejection algorithm for densities with non-compact support}\label{sect:noncompacta1}

In general, we use von Neumann's rejection method when we know a density $g$ for which random variate generation is ``easy'', and can verify that
\begin{displaymath}
\sup_{x\in\mathbb{R}}{\frac{f(x)}{g(x)}}= C<\infty,
\end{displaymath}
where $C$ is a known constant:

\begin{algorithm}
\caption{General rejection algorithm}\label{gen_algo}
\begin{algorithmic}
\LOOP
\STATE Generate $X$ with density $g$
\STATE Generate $U$ uniformly on $[0,1]$
\IF{$Cg(X)U<f(X)$}
\RETURN $X$ \COMMENT{Exit: $X$ is accepted}
\ENDIF
\ENDLOOP
\end{algorithmic}
\end{algorithm}

The expected number of iterations of Algorithm \ref{gen_algo} is $C$. We offer a generalization of Algorithm \ref{gen_algo} for this situation under a certain number of assumptions. Assume for now that $d=1$, and that we can compute both $G$ and $G^{-1}$, where $G$ is the c.d.f.\ for $g$. Assume furthermore that we have an oracle for
\begin{displaymath}
\sup_{x\in R}{\frac{f(x)}{g(x)}}\phantom{1}\text{and}\phantom{1}\inf_{x\in R}{\frac{f(x)}{g(x)}}
\end{displaymath}
for all intervals $R$ of $\mathbb{R}$. One may be able to work things out with oracles for $\sup{f}$, $\inf{f}$, $\sup{g}$, and $\inf{g}$ as well but we opt to take the more convenient approach.

Define $C=\sup_{x\in\mathbb{R}}{\frac{f(x)}{g(x)}}$, which is known thanks to our oracle. The goal is to decompose
\begin{displaymath}
\big\{(x,y):\phantom{1}y\leq f(x)\big\},
\end{displaymath}
as before, into regions for which random variate generation is ``easy''. For a bounded density on $[0,1]$, we are content with the rectangles. This can be mimicked by transforming the $x$-axis with
\begin{displaymath}
x\mapsto G(x)
\end{displaymath}
since $G$ is monotone and continuous. Using this transformation, we note that if $X$ has density $g$, then $G(X)$ is uniform on $[0,1]$. Furthermore, note that if $u=G(x)$, then
\begin{displaymath}
\frac{f\circ G^{-1}(u)}{g\circ G^{-1}(u)}=\frac{f(x)}{g(x)}\isdef \tilde{f}(u),
\end{displaymath}
where $\tilde{f}$ is a density on $[0,1]$ on which we can use our $\sup$-$\inf$ oracle. Since $\tilde{f}\leq C$, we can use the quadtree method of Algorithm \ref{algo_reject_compact} to select a rectangle $R_i$ with probability $\lambda(R_i)$ in the decomposition
\begin{displaymath}
\big\{(u,v):\phantom{1}v\leq \tilde{f}(u),\phantom{1}u\in[0,1]\big\}=\bigcup_{i\in\mathbb{N}}{\big\{R_{i}:\phantom{1}\text{$R_{i}$ is an accepting rectangle}\big\}}.
\end{displaymath}
This decomposition is valid, and the procedure halts with probability one, if $\tilde{f}$ is Riemann-integrable. Put differently, it works if
\begin{displaymath}
\tilde{f}(u)=\frac{f\circ G^{-1}(u)}{g\circ G^{-1}(u)},\phantom{1}0<u<1,
\end{displaymath}
is Riemann-integrable. The expected number of bits required in the decision phase of the algorithm (selecting a leaf of the quadtree) is bounded as in Theorem \ref{thm_compact_case}, when applied to $\tilde{f}$. It is bounded by
\begin{displaymath}
\Delta \isdef 2\bigg(4C+\sum_{k=0}^{\infty}{\big(I_{k}^{+}-I_{k}^{-}\big)}\bigg),
\end{displaymath}
where $I_{k}^{+}$ and $I_{k}^{-}$ are the Riemann approximation of $\int_{0}^{1}{\tilde{f}(u)\mathrm{d}u}$ for a grid of $2^{k}$ equal intervals that partition $[0,1]$.

We only need to analyze the second phase---the method to generate an $\epsilon$-approximation once a rectangle $R$ has been selected, i.e., the method that starts by accepting rectangle \mbox{$R=[u_1,u_2]\times[v_1,v_2]\subseteq [0,1]\times [0,C]$} as illustrated by Figure \ref{fig_bisect} and outputs $X_{\epsilon}$ such that $\textstyle{\|X_{\epsilon}-G^{-1}(U)\|\leq\epsilon}$ where $X$, $X_{\epsilon}$ are coupled and $(U,V)$ is uniform in $R$. This can be achieved by bisection, noting that $G^{-1}(U)$ has distribution function $G$ restricted to $[G^{-1}(u_1),G^{-1}(u_2)]$.
\begin{flushright}
\color{gray}{\textbf{A figure illustrating the principle is at the beginning of the next page.}}\color{black}{}
\end{flushright}
\vfill
\clearpage
\begin{figure}[h!]
\begin{centering}
\includegraphics[trim = 36mm 170mm 18mm 40mm]{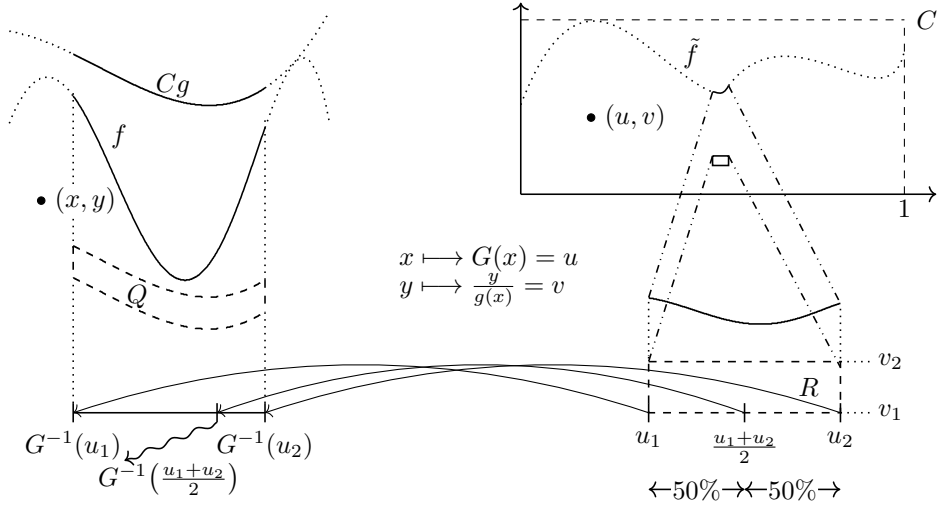}
\caption{An accepting uniform random rectangle $R$ and the bisection of its back-transformation $Q$: $(u,v)\in R$ if and only if $(x,y)\in Q$.}\label{fig_bisect}
\end{centering}
\end{figure}

\begin{remark}
Note that with $x_1=G^{-1}(u_1)$ and $x_2=G^{-1}(u_2)$, we have
\begin{displaymath}
\lambda(R)=(u_2-u_1)(v_2-v_1)=\int_{u_1}^{u_2}{(v_2-v_1)\mathrm{d}u}=\int_{x_1}^{x_2}{(v_2-v_1)(g(x)\mathrm{d}x)}=\lambda(Q).
\end{displaymath}
Also if $(u,v)$ is such that $v<\tilde{f}(u)$, then the corresponding $(x,y)$ point is such that
\begin{displaymath}
y=vg(x)<\tilde{f}(u)g(x)=\bigg(\frac{f(x)}{g(x)}\bigg)g(x)=f(x),
\end{displaymath}
and similarly for $v>\tilde{f}(u)$.
\end{remark}

The inversion Algorithm \ref{algo:inv_bisecto} of Devroye and Gravel \cite{DevGra2015paper1}, which is extension of a similar method for the discrete case first proposed by Han and Hoshi \cite{HanHoshi1997} is summarized as follows:

\begin{flushright}
\color{gray}{\textbf{The algorithm is at the beginning of the next page.}}\color{black}{}
\end{flushright}
\vfill
\clearpage
\begin{algorithm}[h!]
\caption{Inversion/Bisection}\label{algo:inv_bisecto}
\begin{algorithmic}[1]
\REQUIRE $u_1$, $u_2$ such that $u_2>u_1$ \COMMENT{$u_2-u_1$ is the width of an accepting rectangle.}
\STATE $x_1\leftarrow G^{-1}(u_1)$
\STATE $x_2\leftarrow G^{-1}(u_2)$
\LOOP
\IF{$|x_2-x_1|\leq 2\epsilon$}
\STATE $X_{\epsilon}\leftarrow\frac{x_1+x_2}{2}$ \COMMENT{Midpoint}
\RETURN $X_{\epsilon}$
\ELSE
\STATE $\gamma\leftarrow \frac{u_1+u_2}{2}$
\STATE $B\leftarrow\texttt{random unbiased bit}$ \COMMENT{To choose a random side.}
\IF{$B=0$}
\STATE $u_2\leftarrow \gamma$
\STATE $x_2\leftarrow G^{-1}(u_2)$
\ELSE
\STATE $u_1\leftarrow \gamma$
\STATE $x_1\leftarrow G^{-1}(u_1)$
\ENDIF
\ENDIF
\ENDLOOP
\end{algorithmic}
\end{algorithm}

This algorithm picks a uniform subinterval and, if permitted to run forever, would produce a random variable with distribution function $G$ restricted to \mbox{$[G^{-1}(u_1),G^{-1}(u_2)]$} as is illustrated in Figure \ref{fig_bisect}. So, for random variate generation, we only replace line (\ref{bisecto_sub_routine_line_reject_algo}) of Algorithm \ref{algo_reject_compact} by Algorithm \ref{algo:inv_bisecto} where $[u_1,u_2]=R^{*}$, and note that elsewhere in Algorithm \ref{algo_reject_compact}, $f$ must be replaced by $\tilde{f}$.

\begin{theorem}\label{thm_gen_case_I}
The expected number of bits used by Algorithm \ref{algo:inv_bisecto} is not more than
\begin{displaymath}
3+\sum_{j\in\mathbb{Z}}{F(I_j)\log_{2}\bigg(\frac{1}{F(I_j)}\bigg)},
\end{displaymath}
where $I_j=[2\epsilon j, 2\epsilon(j+1))$, and $F\big([a,b)\big)\isdef F(b)-F(a)$. In particular, if that sum is finite for $\epsilon=1$ and if $\textstyle{\int{f\log_{2}\big(1\slash f\big)}>-\infty}$, then, as $\epsilon\downarrow 0$, the expected number of bits does not exceed
\begin{displaymath}
\log_{2}\bigg(\frac{1}{\epsilon}\bigg)+\int{f\log_{2}\bigg(\frac{1}{\epsilon}\bigg)}+5+o(1).
\end{displaymath}
\end{theorem}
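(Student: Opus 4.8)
The plan is to reduce Algorithm \ref{algo:inv_bisecto} to a discrete generation problem on the regular grid $\{I_j\}$ and then invoke the Knuth--Yao/Han--Hoshi bound. First I would condition on the accepting rectangle $R$ produced in the first phase, whose projection is an interval $[u_1,u_2]$ and which is selected with probability $\lambda(R)$. Conditioned on $R$, Algorithm \ref{algo:inv_bisecto} is exactly fair-coin bisection of $[u_1,u_2]$ in the proposal coordinate $u=G(x)$, halted as soon as the back-transformed interval $[G^{-1}(u_1'),G^{-1}(u_2')]$ has length at most $2\epsilon$; call this stopping rule (a). The expected number of bits is then $\sum_i\lambda(R_i)\,\mathbf{E}(T_\epsilon\mid R_i)$, so it suffices to bound the conditional expectation for each rectangle and average.

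The decisive comparison is with an alternative rule (b) that stops as soon as the current $u$-interval is contained in the preimage $[G(2\epsilon j),G(2\epsilon(j+1)))$ of a single grid cell $I_j$. Since containment in one cell forces $x$-width $\leq 2\epsilon$, rule (b) can never stop before rule (a); coupling the two on the same coin sequence shows that Algorithm \ref{algo:inv_bisecto} uses pointwise no more bits than the rule-(b) process, whence $\mathbf{E}(T_\epsilon\mid R)\leq \mathbf{E}(T_\epsilon^{(b)}\mid R)$. But, conditioned on $R$, the rule-(b) process is precisely the Han--Hoshi interval algorithm \cite{HanHoshi1997} that generates from fair bits the grid cell of $G^{-1}(U)$ with $U$ uniform on $[u_1,u_2]$; by the analysis of Knuth and Yao \cite{KnuthYao1976} its conditional expected length exceeds the conditional cell-entropy $H(\mathrm{cell}\mid R)$ by at most a bounded constant.

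The crucial identity is that, \emph{after averaging over} $R$, the generated cell has exactly the target law. Run the whole scheme without halting: by Theorem \ref{thm_quadtree} applied to $\tilde f$, the quadtree/inversion procedure produces $X=G^{-1}(U^\star)$ with density $f$, and rule (b) halts precisely when the cell index $\lfloor X/(2\epsilon)\rfloor$ is determined. Hence the output cell is distributed as $Y_\epsilon\isdef\lfloor X/(2\epsilon)\rfloor$, so $\mathbf{P}\{\mathrm{cell}=j\}=F(I_j)$. Because conditioning cannot increase entropy, $\sum_i\lambda(R_i)H(\mathrm{cell}\mid R_i)=H(\mathrm{cell}\mid\mathrm{rectangle})\leq H(\mathrm{cell})=\sum_{j\in\mathbb{Z}}F(I_j)\log_2(1/F(I_j))$. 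Averaging the per-rectangle bound then gives $\mathbf{E}(T_\epsilon)\leq 3+\sum_{j}F(I_j)\log_2(1/F(I_j))$, as claimed (in fact with a bit to spare).

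For the asymptotic statement I would simply expand the grid entropy. The hypotheses---that the sum is finite at $\epsilon=1$ (Rényi's condition, i.e.\ $H(\lfloor X\rfloor)<\infty$) and that $\int f\log_2(1/f)>-\infty$---are exactly those needed for the classical quantization-entropy expansion of Rényi \cite{Renyi1959} and Csisz\`ar \cite{Csiszar1961}, which yields $\sum_{j}F(I_j)\log_2(1/F(I_j))=\log_2(1/\epsilon)+\int f\log_2(1/f)-1+o(1)$ as $\epsilon\downarrow 0$; adding the constant then gives the stated bound. I expect the main obstacle to be the identity $\mathbf{P}\{\mathrm{cell}=j\}=F(I_j)$: it is what converts a bisection carried out in the proposal coordinate $u=G(x)$ into the entropy of the \emph{target} discretization, and it relies on the accepting rectangles tiling the region under $\tilde f$ (valid by Riemann-integrability and Theorem \ref{thm_quadtree}). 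Verifying the rule-(a)/rule-(b) coupling rigorously and pinning down the additive constant so that it does not exceed $3$ are the remaining points requiring care.
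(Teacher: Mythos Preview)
Your proposal is correct and follows essentially the same route as the paper's proof: condition on the accepting rectangle, bound the conditional expected bit count by the conditional grid-cell entropy plus~$3$ via the Han--Hoshi/Devroye--Gravel analysis, average over rectangles, and invoke Csisz\'ar for the asymptotic. The only cosmetic differences are that where you use the inequality $H(\mathrm{cell}\mid\mathrm{rectangle})\le H(\mathrm{cell})$, the paper applies Jensen's inequality to $u\mapsto u\log_2(1/u)$ (equivalent statements), and where you obtain $\mathbf{P}\{\mathrm{cell}=j\}=F(I_j)$ by running the coupled process to its limit, the paper computes $\sum_i p_i\xi_{ji}=\int_{I_j}f$ directly from the tiling identity $\sum_{i:\,x\in Q_i^\star}Cq_i g(x)=f(x)$.
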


\begin{remark}
Theorem \ref{thm_gen_case_I} establishes that Algorithm \ref{algo:inv_bisecto} is optimal to within an additive constant. In particular, its main term,  $\textstyle{\log_{2}\big(1\slash \epsilon\big)}$, and second term, the differential entropy $\textstyle{\int{f\log_{2}\big(1\slash f\big)}}$, match the lower bound.
\end{remark}

\begin{remark}
The expected number of bits required in the decision phase of the algorithm, $\Delta$, is finite under smoothness conditions on $\tilde{f}$. It depends also on $C$, but clearly not on $\epsilon$.
\end{remark}

\begin{proof}[Proof of Theorem \ref{thm_gen_case_I}]
Let us denote an accepting rectangle $R_i$ and its projection by $R_{i}^{\star}$. So, if $R_{i}^{\star}=[u_i,v_i]$, then $R_i=[u_i,v_i]\times [\alpha_i,\alpha_i+Cq_i]$, where $0\leq \alpha_i\leq \alpha_i+Cq_i\leq C$, $q_i\in[0,1]$. The probability mass of $R_i$ is $p_i\isdef (v_i-u_i)C q_i$.

By the mapping $G^{-1}$, $R_i$ gets mapped to a contiguous region $Q_i$, of projection \mbox{$Q_i^{\star}\isdef [a_i,b_i]$}, with
\begin{displaymath}
a_i=G^{-1}(u_i)\text{ , }b_i=G^{-1}(v_i),
\end{displaymath}
and thus,
\begin{displaymath}
v_i-u_i = \int_{a_i}^{b_i}{g}=G(Q_i^{\star})=\frac{p_i}{C q_i}.
\end{displaymath}
Here we use the notation $G([a_i,b_i])=G(b_i)-G(a_i)$. We also note that for all $x$,
\begin{displaymath}
\sum_{i:\, x\in Q_{i}^{\star}}{C q_i g(x)}=f(x).
\end{displaymath}
Define a regular $2\epsilon$-grid on $\mathbb{R}$ with intervals $I=[2\epsilon j,2\epsilon(j+1))$ for all $j\in\mathbb{Z}$.

If we exit with rectangle $R_i$, then the bisection phase of the algorithm takes an expected number of bits bounded by
\begin{displaymath}
3+\sum_{j\in\mathbb{Z}}{\xi_{ji}\log_{2}\bigg(\frac{1}{\xi_{ji}}\bigg)},
\end{displaymath}
where
\begin{displaymath}
\xi_{ji}=\frac{G\big(I_j\cap Q_{i}^{\star}\big)}{G(Q_{i}^{\star})}\text{ , }j\in\mathbb{Z}
\end{displaymath}
is a probability vector in $j$. This result is due to the observation that the bisection method is equivalent to the algorithm analyzed by Devroye and Gravel \cite{DevGra2015paper1} and Gravel \cite{gravelphdthesis} in the context of the discrete distribution algorithm by Han and Hoshi \cite{HanHoshi1997}. Thus, the expected number of bits, averaged over all $R_i$, is not more than
\begin{align}
&3+\sum_{i\in\mathbb{Z}}{p_i\sum_{j\in\mathbb{Z}}{\xi_{ji}\log_{2}\bigg(\frac{1}{\xi_{ji}}\bigg)}}\label{ikuiku_no1}.
\end{align}
By the concavity of $u\log_{2}\big(1\slash u\big)$ in $u$, we have by Jensen's inequality that (\ref{ikuiku_no1}) is not more than
\begin{displaymath}
3+\sum_{j\in\mathbb{Z}}\Bigg(\sum_{i\in\mathbb{Z}}{p_i\xi_{ji}}\Bigg)\log_{2}\Bigg(\frac{1}{\sum_{i\in\mathbb{Z}}{p_i\xi_{ji}}}\Bigg).
\end{displaymath}
But
\begin{align*}
\sum_{i\in\mathbb{Z}}{p_i\xi_{ji}}&=\sum_{i\in\mathbb{Z}}{p_i\frac{G\big(I_j\cap Q_{i}^{\star}\big)}{G(Q_{i}^{\star})}}\\
&=\sum_{i\in\mathbb{Z}}{C q_i G(I_j \cap Q_{i}^{\star})}\\
&=\sum_{i\in\mathbb{Z}}{C q_i \int_{I_j}{\mathds{1}_{\{x\in Q_i^{\star}\}}g(x)\mathrm{d}x}}\\
&=\int_{I_j}{\Bigg(\sum_{i\in\mathbb{Z}}{C q_i \mathds{1}_{\{x\in Q_{i}^{\star}\}}}\Bigg)g(x)\mathrm{d}x}\\
&=\int_{I_j}{f(x)\mathrm{d}x}\\
&\isdef F(I_{j}),
\end{align*}
where $F$ is the distribution function of $f$, and $F(I_j)=F(b_j)-F(a_j)$. Thus the expected number of bits in the bisection phase does not exceed
\begin{align}
&3+\sum_{j\in\mathbb{Z}}{F(I_j)\log_{2}\bigg(\frac{1}{F(I_j)}\bigg)}\label{ikuiku_no2}.
\end{align}
In the last term, we recognize the entropy defined by the probability vector $\big(F(I_j)\big)_{j\in\mathbb{Z}}$.

A theorem due to Csisz\'{a}r \cite{Csiszar1961} established that if $\big(F(I_j)\big)_{j\in\mathbb{Z}}$ has a finite entropy for some $\epsilon>0$, and $\int{f\log_{2}\big(1\slash f\big)}>-\infty$, then as $\epsilon\downarrow 0$,
\begin{displaymath}
(\ref{ikuiku_no2})\leq \int{f\log_{2}\frac{1}{f}}+\log_{2}\frac{1}{\epsilon}+5+o(1).
\end{displaymath}
The ``5'' can be replaced by ``3'' if in addition $f$ is bounded and decreasing on its support, $[0,\infty)$ (see Gravel and Devroye \cite{DevGra2015paper1}).

\end{proof}

\section{Conclusion and outlook}
The extension of our results to dimensions greater than one
for densities with unbounded support should pose no big problems.
With the oracles introduced in our modification of von Neumann's
method, we believe that it is impossible to design a rejection algorithm
for densities that are not Riemann-integrable, so the question of
the design of a universally valid rejection algorithm under the random
bit model remains open.

\section*{Acknowledgment}

The authors would like to thank all three referees for their feedback. They are invited to Luc's house for drinks if they wish to come out of
the closet. Luc Devroye's research was supported by an NSERC Discovery Grant. Claude Gravel thanks Gilles Brassard.

\bibliographystyle{siam}  

\newcommand{\SortNoop}[1]{}

\appendix

\section{Riemann integrability and the sup/inf oracle}\label{app_a}

In this section, we contruct a family of densities that are not Riemann-integrable for which the oracle is useless. Let $\delta\in[0,1\slash 3)$ be a parameter, and let $I_{\delta}\subseteq [0,1]$ be a Cantor-like set constructed below. Setting
\begin{displaymath}
f_{\delta}(x)=\frac{1}{\lambda(I_{\delta})}\mathds{1}_{\{x\in I_{\delta}\}},
\end{displaymath}
where $\lambda$ is the Lebesgue measure, we have (see below) $\lambda(I_{\delta})=(1-3\delta)\slash (1-2\delta)$. The Lebesgue measure of the set of discontinuities is $1-\lambda(I_{\delta})$ and is zero only if $\delta=0$. The case of $\delta=0$ corresponds to the usual uniform density on $[0,1]$ that is Riemann-integrable. All cases of $\delta\in(0,1\slash 3)$ are Lebesgue integrable but not Riemann-integrable because the set of discontinuities is non-zero and yet possess a cumulative distribution function. For every $\epsilon>0$ and every $x\in I_{\delta}$ we have
\begin{align*}
&\inf_{[x-\epsilon,x+\epsilon]}{f(x)}=0,\\
\text{and}&\\
&\sup_{[x-\epsilon,x+\epsilon]}{f(x)}=\frac{1-2\delta}{1-3\delta}.
\end{align*}
Since these boundaries are invariant under changes of $\epsilon$, Algorithm \ref{algo_reject_compact}, when used for rejection, say, from a uniform density, has an infinite loop with positive probability. It is, therefore, essential that Riemann-integrable densities are considered as that the supremum and infimum returned by the oracle over given small intervals converge to each other as intervals shrink.

For the construction of $I_{\delta}$, we recursively remove middle \emph{open} subintervals of geometrically decreasing sizes. Let $I_{j,k}\subset [0,1]$ for $j\in\mathbb{N}\setminus\{0\}$ and $k=0,\ldots{},2^{j}-1$ be the $2^{j}$ \emph{closed} subintervals that are \emph{left} once the middle parts are removed from the previous subintervals $I_{j-1,k}$ with $k=0$ corresponding to the leftmost subinterval and so on. Initially, $I_{0,0}=[0,1]$. For all $j\in\mathbb{N}\setminus\{0\}$ and $k\in\{0,\ldots{},2^{j}-1\}$, the length of a removed middle part is $\delta^{j}$. For all $j\in\mathbb{N}\setminus\{0\}$, the total length \emph{not} removed at the $j$-th step is $2^{j}\lambda(I_{j,0})$ because the subintervals are of the same length. Let $I_{\delta}$ be the limiting subset of $[0,1]$ that is left, i.e.,
\begin{displaymath}
I_{\delta}=\bigcap_{j=1}^{\infty}{\bigcup_{k=0}^{2^{j}-1}{I_{j,k}}}.
\end{displaymath}
We compute $\lambda(I_{\delta})$ as follows:
\begin{align*}
\lambda(I)&=\lim_{j\to\infty}{\sum_{k=0}^{2^{j}-1}{\lambda(I_{j,k})}}\phantom{1}\text{(by the definition of $I_{\delta}$)}\\
&=\lim_{j\to\infty}{2^{j}\lambda(I_{j,0})},\\
\lambda(I_{1,0})&=\frac{1}{2}-\frac{\delta}{2},\\
\lambda(I_{j,0})&=\frac{1}{2}\lambda(I_{j-1,0})-\frac{\delta^{j}}{2}\\
&=\frac{1}{2^{j}}-\frac{\delta}{2^{j}}-\frac{\delta^{2}}{2^{j-1}}-\ldots-\frac{\delta^{j}}{2}\phantom{1}\text{for $j\geq 1$,}
\end{align*}
and therefore
\begin{displaymath}
2^{j}\lambda(I_{j,0})=1-\delta\sum_{i=0}^{j-1}{(2\delta)^{i}}=1-\delta\bigg(\frac{(2\delta)^{j}-1}{2\delta-1}\bigg),
\end{displaymath}
so that
\begin{displaymath}
\lambda(I_{\delta})=\frac{1-3\delta}{1-2\delta}.
\end{displaymath}

\section{A naive modification to the general rejection method that is incorrect}\label{app_b}

A trivial, but incorrect, modification to Algorithm \ref{orig_vnabc} would be:
\begin{algorithm}[h!]
\begin{algorithmic}[1]
\LOOP
\STATE By bisection, generate $X_{\epsilon}$ on $[0,1]^{d}$ such that $X_{\epsilon}$ is an $\epsilon$-approximation of $X$.
\STATE Generate $U$ uniformly on $[0,1]$.
\STATE Decide ``$U\leq \frac{1}{C}f(X_{\epsilon})$'' by using two bits in expected value.\label{doomed_algo_line1}
\STATE If the condition from the previous line is satisfied, then return $X_{\epsilon}$.
\ENDLOOP
\end{algorithmic}
\end{algorithm}

This attempt leads to failure. Let $A$ be the support of $X_{\epsilon}$, which is necessarily contained in a fixed countable subset of $[0,1]^{d}$. Then given any Riemann-integrable density $g$, take a finite subset $A^{\star}$ of $A$, and modify $g$ on $A^{\star}$ by setting
\begin{equation*}
f(x) = \left\{
\begin{array}{rl}
g(x) & \text{if } x \notin A^{\star},\\
C & \text{if } x \in A^{\star}.
\end{array} \right.
\end{equation*}
We have that $f$ is still a Riemann-integrable density bounded by $C$ (and its set of discontinuities is of measure $0$ because $A^{\star}$ is countable) but since $f(X_{\epsilon})=C$ if $X_{\epsilon}\in A^{\star}$, we accept all $X_{\epsilon}\in A^{\star}$, regardless of the density $g$ we started with. Since $\mathbf{P}\{X_{\epsilon}\in A^{\star}\}>0$, we make an error with positive probability.

If we set
\begin{equation*}
f(x) = \left\{
\begin{array}{rl}
g(x) & \text{if } x \notin A,\\
0 & \text{if } x \in A,
\end{array} \right.
\end{equation*}
then $g$ is no longer Riemann-integrable, and in that case, $f(X_{\epsilon})=0$ with probability one, and therefore, the algorithm loops forever, regardless of the choice of $g$.

This simple example shows the necessity of the oracle and of the condition of Riemann-integrability.

\end{document}